
\documentclass[conference,letterpaper]{IEEEtran}

\addtolength{\topmargin}{9mm}

\usepackage[utf8]{inputenc} 
\usepackage[T1]{fontenc}
\usepackage{url}
\usepackage{ifthen}
\usepackage{cite}
\usepackage[cmex10]{amsmath} 
\usepackage{amssymb,amsfonts,bbm,amsthm,bm,xcolor}
\usepackage{stmaryrd, mathtools}

\usepackage{pgfplots,tikz}  
\usepackage{subcaption}
\interdisplaylinepenalty=2500 

\newtheorem{theorem}{Theorem}

\newtheorem{definition}{Definition}
\newtheorem{remark}{Remark}

\hyphenation{op-tical net-works semi-conduc-tor}
\newcommand\blfootnote[1]{%
	\begingroup
	\renewcommand\thefootnote{}\footnote{#1}%
	\addtocounter{footnote}{-1}%
	\endgroup
}

\begin{document}
\title{Reed-Muller Codes for Quantum Pauli and\\ Multiple Access Channels} 

\author{%
  \IEEEauthorblockN{Dina Abdelhadi\IEEEauthorrefmark{2}\IEEEauthorrefmark{3}, Colin Sandon\IEEEauthorrefmark{1}\IEEEauthorrefmark{3},  Emmanuel Abbe\IEEEauthorrefmark{1}\IEEEauthorrefmark{2}, Ruediger Urbanke\IEEEauthorrefmark{2} }
  
  \IEEEauthorblockA{
  \IEEEauthorrefmark{1}Chair of Mathematical Data Science, EPFL 
    \\\{colin.sandon, emmanuel.abbe\}@epfl.ch}
    \IEEEauthorrefmark{2}School of Computer and Communication Sciences, EPFL

    \\\{dina.abdelhadi, rudiger.urbanke\}@epfl.ch
    \\\IEEEauthorrefmark{3} Equal first author contributors.}

\maketitle~\blfootnote{\textcopyright 2025 IEEE. Personal use of this material is permitted.  Permission from IEEE must be obtained for all other uses, in any current or future media, including reprinting/republishing this material for advertising or promotional purposes, creating new collective works, for resale or redistribution to servers or lists, or reuse of any copyrighted component of this work in other works.}

\begin{abstract}
Reed-Muller (RM) codes have undergone significant analytical advancements over the past decade, particularly for binary memoryless symmetric (BMS) channels. We extend the scope of RM codes development and analysis to multiple-access channels (MACs) and quantum Pauli channels, leveraging a unified approach. Specifically, we first derive the achievable rate region for RM codes on so-called Q-MACs, a class of MACs with additive correlated noise. This is achieved via a generalization of the bending and boosting arguments defined in \cite{abbe2023proof}. We then put forward a  connection between the rate region of these Q-MACs and quantum RM codes designed for Pauli noise channels. This connection highlights a universality property of quantum RM codes, demonstrating their rate-optimal performance across a range of channel parameters, rather than for a single Pauli channel.

\end{abstract}

\section{Introduction}

\subsection{Reed-Muller Codes}

Reed-Muller (RM) codes are a family of binary error-correcting codes such that $RM(r,m)$ has codewords consisting of the evaluations of $m$-variate polynomials of degree at most $r$ on $\mathbb{F}_2^m$~\cite{1057465,6499441,abbe2020reed}. Despite the currently open problem of achieving general efficient decoders, RM codes continue to garner significant attention due to several notable properties. These include the simplicity of their construction, their superior performance under maximum a posteriori (MAP) decoding compared to Polar codes~\cite{mondelli2014polar}, and their tri-orthogonality properties, which facilitate the construction of quantum RM codes capable of supporting transversal $T$-gates~\cite{rengaswamy2020optimality}.

RM codes have been proven to achieve the capacity of binary erasure channels~\cite{kudekar2016reed} and, more recently, of binary-input memoryless channels for both the bit \cite{reeves2021reed} and block \cite{abbe2023proof} error probability. These latter results also imply that it is  possible to construct ``quantum RM codes'' that achieve the hashing bound for Pauli channels. The results of \cite{abbe2023proof} on block error probability imply further that RM codes achieve strong secrecy up to capacity on classical-quantum channels for both the BSC and pure-state wire-tap channels.

\subsection{Contributions -- Executive Summary}
We study the application of RM codes to a two-user multiple-access channel (MAC) constructed from a memoryless channel with independently chosen binary inputs subjected to correlated noise, which we refer to as a Q-MAC. This model is relevant for classical communication and, perhaps more significantly, serves as an abstraction for quantum communication over Pauli channels.

Specifically, by identifying an achievable rate region for the Q-MAC, we determine regions of noise parameters where a single quantum code can facilitate reliable transmission. This approach proves a universality property of quantum RM codes, enabling robust performance across varying noise conditions.

It is important to note that the Q-MAC is a classical multiple access channel, with classical inputs and output, not to be confused with quantum multiple access channels. The `Q' in Q-MAC refers to the correspondence between the decoding problem in the classical multiple access setting and decoding problem in the case of the quantum Pauli channel. 
\subsection{Quantum Codes}

In our setting, quantum information is represented by the states of two-level systems, known as qubits, and the rate of a quantum code is defined as the ratio of logical qubits to physical qubits.

Quantum error-correcting codes enable the protection of quantum states against noise, playing a crucial role in quantum  communication and computation~\cite{Shor1995scheme}. 
Constructing quantum codes is a non-trivial task, due to various constraints dictated by the rules of quantum mechanics, including the no cloning theorem, and the fact that measurement disturbs the quantum state. The subclass of quantum CSS codes, introduced in~\cite{calderbank1996good,steane1996error}, provides a simplified framework for constructing quantum codes by leveraging existing classical codes. Given two classical binary linear codes $\mathcal{C}_X = [n, k_X]$ and $\mathcal{C}_Z = [n, k_Z]$, satisfying the condition $\mathcal{C}_X^\perp \subseteq \mathcal{C}_Z$~\cite{eczoo_qubit_css}, a valid quantum CSS code $\mathcal{C}_Q = \llbracket n, k_X + k_Z - n \rrbracket$ can be constructed. Since the parity checks of a quantum code correspond to quantum observables to be measured, these conditions ensure that these observables commute. 

\subsection{Quantum Channels}
Various mathematical models for quantum noise have been proposed in the literature, with one prominent model being that of Pauli noise~\cite{gottesman2009introductionquantumerrorcorrection}.

A quantum Pauli channel maps an input quantum state $\rho$ to itself with probability $p_I$, applies a Pauli $X$ error with probability $p_X$ (resulting in $X\rho X$), applies a Pauli $Y$ error with probability $p_Y$ (resulting in $Y\rho Y$), or applies a Pauli $Z$ error with probability $p_Z$ (resulting in $Z\rho Z$). Since the Pauli matrix $Y = iXZ$, a $Y$ error is equivalent to the simultaneous application of $X$ and $Z$ errors.

For quantum Pauli channels, the hashing bound provides an achievable rate for error-free transmission~\cite{bennett1996mixed}. It corresponds to the coherent information of a single channel use and is given by the expression $1 - H[p_I, p_X, p_Y, p_Z]$, where $H$ denotes the Shannon entropy of the probability distribution $\{p_I, p_X, p_Y, p_Z\}$.

Decoding a quantum CSS code over a Pauli channel can be mapped to the problem of decoding classical binary correlated errors, using a mapping of Pauli errors to 2-bit binary vectors, $I\rightarrow (0,0), X\rightarrow(1,0),Y\rightarrow (1,1), Z\rightarrow(0,1).$

\subsection{Quantum Codes based on RM Codes}
Quantum CSS RM codes, as well as non-CSS variants, have been proposed in~\cite{steane1996quantumreedmullercodes} and~\cite{zhang1997quantumreedmullercodes}. These works focused on constructing valid quantum RM codes,  utilizing the nesting and duality of RM codes, and analyzing their rate and distance properties:  Let $\mathcal{C}_X = RM(r_X, m)$ with rate $R_X$ and $\mathcal{C}_Z = RM(r_Z, m)$ with rate $R_Z$. The dual code is $\mathcal{C}_X^\perp = RM(m - r_X - 1, m)$ with rate $R_X^\perp = 1 - R_X$. The CSS condition $\mathcal{C}_X^\perp \subseteq \mathcal{C}_Z$ translates to $m \leq r_X + r_Z + 1$, or equivalently, $R_X + R_Z \geq 1$.

As discussed in~\cite[Section 1.5.2]{goswami2021quantum}, each Pauli channel corresponds to two classical binary memoryless (BMS) channels: an $X$-error channel $W_X = BSC(p_X + p_Y)$, and a $Z$-error channel $W_Z$. To account for the correlation between $X$ and $Z$ errors, we model $W_Z$ as a mixture of two binary symmetric channels (BSCs). The specific BSC realization of $W_Z$ depends on whether an $X$ error has occurred, with an additional output flag bit indicating the presence of an $X$ error. With probability $p_X + p_Y$ (when an $X$ error occurs), $W_Z$ acts as $\text{BSC}\big(\frac{p_X}{p_X + p_Y}\big)$. With probability $p_I + p_Z$ (when no $X$ error occurs), $W_Z$ acts as $\text{BSC}\big(\frac{p_Z}{p_I + p_Z}\big)$.   Decoding errors on these induced classical channels enables correcting Pauli errors on the quantum channel. A little thought shows that since RM codes achieve the capacity of these classical channels~\cite{reeves2021reed,abbe2023proof}, $I(W_X)$ and $I(W_Z)$, quantum RM codes achieve the hashing rate $I(W_X) + I(W_Z) - 1$ over the quantum Pauli channel. We will refer to this type of decoding, where one type of error is decoded first, followed by decoding the other code given the decoded information as \emph{successive decoding}.

\section{Definitions}

\subsection{Connection between MAC decoding and quantum Pauli channels}
\begin{definition}(Q-MAC)

A Q-MAC is a two-user memoryless MAC with a quaternary output alphabet $\mathcal{Y}\in \mathbb{F}_2^2,$
generated as $Y=(Y^{(1)},Y^{(2)})=(X^{(1)},X^{(2)})+(\Delta^{(1)},\Delta^{(2)}),$ with $X^{(i)} \in \mathbb{F}_2, \Delta^{(i)} \in \mathbb{F}_2$ for $i\in\{1,2\}$, and $(\Delta^{(1)},\Delta^{(2)}) \sim P_{\text{noise}}$. Note that $\Delta^{(1)}$ and $\Delta^{(2)}$ are in general correlated.   
\end{definition}
We show that (non-degenerate) decoding of errors occurring over codewords of a quantum CSS code transmitted over a quantum Pauli noise channel can be reduced to the problem of decoding a Q-MAC.

Since Pauli errors can be mapped to 2-bit binary vectors, we can express the noise model of a Pauli channel as a Q-MAC, such that the event of a Pauli error occurring on the Pauli channel corresponds to the event of a correlated 2-bit error on the Q-MAC. 
Due to the correspondence between the two models, the ability to decode errors that occur on a Q-MAC implies the ability to determine errors occurring on the quantum Pauli noise channel. 

\subsection{Contributions -- Extended Version}

Successive decoding of quantum RM codes can be seen as a technique for transmission at a rate pair achieving one of the corner points of the Q-MAC. Our aim is to investigate joint decoding suitable for transmission on further points of the dominant face of the Q-MAC other than the corner points. Although, in general, joint decoding might be more complex, it has the advantage of automatically exhibiting some universality property, allowing reliable transmission over a whole set of channel parameters. We will belabor this point when discussing Figure~\ref{fig:robustness}.

Explicitly, we determine necessary conditions on the two codes $C_1,C_2$ used by the two users on the Q-MAC for a rate pair $(R_1,R_2)$ to be achievable, and show that these conditions are also sufficient in case of Reed-Muller codes.
To that end, we define the following intermediate notions of decoding: 
 \begin{definition}(Merged weak local decoding)
Merged weak local decoding refers to decoding one of the two codeword coordinates or their XOR with error probability $1/2-\Omega(1)$ (the choice of the bit pair to decode is irrelevant for transitive codes).
\end{definition}
\begin{definition}(Global decoding)
Global decoding refers to decoding both codewords with error probability $o(1)$.
\end{definition}
Our proof proceeds in several stages:
\begin{itemize}
\item {\bf Bending:} A pair of $(R_1,R_2)-RM$ codes achieves \emph{merged weak local decoding} on a Q-MAC if $R_1+R_2$ is less than the channel's capacity.
\item {\bf Boosting:} If we can achieve \emph{merged local decoding} on a Q-MAC with a given pair of RM codes then given RM codes with slightly lower rates we can completely recover at least one of the codewords or their XOR with probability $1-o(1)$.
\item {\bf Reduction to a binary channel:} Given the value of one of the codewords or their XOR, decoding the other (or either in the XOR case) is equivalent to decoding it on a noisy channel. If $(R_1,R_2)$ is inside the MAC region and $\min(R_1,R_2)< I[X^{(1)}_0;X^{(1)}_0+X^{(2)}_0,Y_0]$ then the code in question has a rate that is less than the capacity of the channel, so we can decode the appropriate code.
\item {\bf MAC-to-Quantum map:} A pair of $(R_1,R_2)-RM$ codes achieves \emph{global decoding} on a quantum Pauli noise channel if both the \emph{reduction}, as well as the quantum CSS condition for commutativity of parity checks hold.
\end{itemize}

\section{Main Results}
\subsection{Necessary rate region for multiple access channel with overlapping codes}
First, let $P_{noise}$ be a probability distribution over $\mathbb{F}_2^2$. Now, we consider a problem where we have two linear codes $C_1$ and $C_2$ over $\{0,1\}^n$ of rates $R_1$ and $R_2$ respectively. Denoting vectors using bold font, we choose $\bm{X}^{(1)}\in C_1$ and $\bm{X}^{(2)}\in C_2$ uniformly at random. Then, for each $1\le i\le n$, we set $X_i=(X^{(1)}_i, X^{(2)}_i)$, draw $\Delta_i\sim P_{noise}$ and set $Y_i=X_i+\Delta_i$. This is a class of symmetric multiple access channels.

The key question is whether or not we can recover the value of $\bm{X}$  from the value of $\bm{Y}$. Theorem \ref{thm:necessary} determines the necessary conditions for recovering $\bm{X}$ from $\bm{Y}$.

\begin{theorem} \label{thm:necessary}
Consider the scenario explained above. Let $(X^{(1)}_0,X^{(2)}_0)$ be drawn uniformly at random from $\mathbb{F}_2^2$, $\Delta_0\sim P_{noise}$, and $Y_0=(X^{(1)}_0,X^{(2)}_0)+\Delta_0$. In order for it to be possible to recover $\bm{X}$  from $\bm{Y}$  with probability $\Omega(1)$, all of the following must be true:
\begin{enumerate}
\item $R_1+R_2\le I[(X^{(1)}_0,X^{(2)}_0);Y_0]$.
\item $R_1\le I[X^{(1)}_0;X^{(2)}_0,Y_0]$.
\item $R_2\le I[X^{(2)}_0;X^{(1)}_0,Y_0]$.
\item $\log_2(|C_1\cap C_2|)/n \le I[X^{(1)}_0;X^{(1)}_0+X^{(2)}_0,Y_0]$.
\end{enumerate}
\end{theorem}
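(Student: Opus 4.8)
The plan is to prove the four inequalities as converse (outer-bound) statements, each following from a variant of Fano's inequality applied to a carefully chosen ``genie-aided'' observer. Throughout, I would work with the single-letter random variables $(X^{(1)}_0, X^{(2)}_0, \Delta_0, Y_0)$ and lift statements to the block length $n$ using the fact that $\bm{X}^{(1)}, \bm{X}^{(2)}$ are uniform on linear codes, so $H(\bm{X}^{(1)}) = nR_1$, $H(\bm{X}^{(2)}) = nR_2$, and the noise is i.i.d.\ across coordinates. The hypothesis is that we can recover $\bm{X}$ from $\bm{Y}$ with probability $\Omega(1)$; I would first note that this is slightly weaker than the usual $o(1)$-error hypothesis in Fano, so I would use the sharper form of Fano's inequality (or a Fano-type bound for list-size-one decoding with non-vanishing error) which still yields $H(\bm{X} \mid \bm{Y}) = o(n)$ provided the success probability is bounded away from zero and the ambient alphabet is finite --- more precisely, $\Omega(1)$ success probability forces $H(\bm{X}\mid \bm{Y}) \le (1 - \Omega(1)) H(\bm{X}) + O(1)$, which combined with a subadditivity/tensorization argument over the $n$ coordinates gives the single-letter bounds up to $o(1)$ slack in the rates. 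This is the step I expect to require the most care.

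For inequality (1), the sum-rate bound, I would write $n(R_1+R_2) = H(\bm{X}^{(1)},\bm{X}^{(2)}) = I(\bm{X}^{(1)},\bm{X}^{(2)};\bm{Y}) + H(\bm{X}^{(1)},\bm{X}^{(2)}\mid \bm{Y})$; the second term is $o(n)$ by Fano, and the first is at most $\sum_{i=1}^n I(X_i;Y_i) = n\, I[(X^{(1)}_0,X^{(2)}_0);Y_0]$ by the standard memoryless single-letterization (using that the $X_i$ are not independent but the channel is memoryless, so the mutual information still tensorizes as an upper bound via the chain rule and the fact that conditioning reduces entropy on the $Y$ side). For inequalities (2) and (3), which are symmetric, I would hand the decoder a genie: suppose an oracle reveals $\bm{X}^{(2)}$. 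Then recovering $\bm{X}^{(1)}$ from $(\bm{Y},\bm{X}^{(2)})$ is still possible with probability $\Omega(1)$, so Fano gives $nR_1 = H(\bm{X}^{(1)}) = H(\bm{X}^{(1)}\mid \bm{X}^{(2)}) \le I(\bm{X}^{(1)};\bm{Y},\bm{X}^{(2)}) + o(n) \le n\, I[X^{(1)}_0; Y_0, X^{(2)}_0] + o(n)$, again by single-letterization; here I use independence of $\bm{X}^{(1)}$ and $\bm{X}^{(2)}$ to equate $H(\bm{X}^{(1)})$ with $H(\bm{X}^{(1)}\mid\bm{X}^{(2)})$.

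Inequality (4) is the genuinely new one and the crux of the theorem. The idea is that the intersection $C_1 \cap C_2$ controls how much ``shared randomness'' is invisible once we only ask to recover the sum $\bm{X}^{(1)} + \bm{X}^{(2)}$ together with each individual codeword modulo the intersection --- or, dually, it measures the ambiguity in $\bm{X}^{(1)}$ that cannot be resolved even from $\bm{X}^{(1)}+\bm{X}^{(2)}$. Concretely, I would consider the random variable $\bm{Z} = \bm{X}^{(1)} + \bm{X}^{(2)} \in C_1 + C_2$, and observe that conditioned on $\bm{Z}$, the pair $(\bm{X}^{(1)},\bm{X}^{(2)})$ still has $\log_2|C_1\cap C_2|$ bits of entropy: given $\bm{Z}$, the set of consistent $(\bm{X}^{(1)},\bm{X}^{(2)})$ is a coset of $\{(c,c): c\in C_1\cap C_2\}$, so $H(\bm{X}^{(1)}\mid \bm{Z}) = \log_2|C_1\cap C_2|$. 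Since recovering $\bm{X}$ from $\bm{Y}$ in particular recovers $\bm{X}^{(1)}$, and since $\bm{Z}$ is a function of $\bm{X}$, Fano applied to recovering $\bm{X}^{(1)}$ from $(\bm{Y}, \bm{Z})$ gives $\log_2|C_1\cap C_2| = H(\bm{X}^{(1)}\mid\bm{Z}) \le I(\bm{X}^{(1)}; \bm{Y}, \bm{Z}\mid \bm{Z}) + o(n) = I(\bm{X}^{(1)};\bm{Y}\mid \bm{Z}) + o(n)$. The remaining task is to single-letterize $I(\bm{X}^{(1)};\bm{Y}\mid\bm{Z})$ up to $n\, I[X^{(1)}_0; X^{(1)}_0+X^{(2)}_0, Y_0]$; here the subtlety is that $\bm{Z}$ is a codeword-valued variable, not i.i.d., so I would instead bound $I(\bm{X}^{(1)};\bm{Y}\mid\bm{Z}) \le I(\bm{X}^{(1)};\bm{Y},\bm{Z}) = \sum_i I(X^{(1)}_i; Y_i, Z_i \mid Y^{<i}, Z^{<i})$ and argue that each term is at most $I[X^{(1)}_0; Y_0, Z_0]$ where $Z_0 = X^{(1)}_0 + X^{(2)}_0$, using the memoryless channel and the fact that past outputs are conditionally independent of the current input given the current output --- this conditional-independence bookkeeping, making sure the genie variable $\bm{Z}$ is handled correctly in the single-letterization, is where I would expect to spend the most effort, and it is the main obstacle of the proof.
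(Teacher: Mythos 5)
There is a genuine gap in your treatment of item (4), which you yourself flag as the crux. Your identities $H(\bm{X}^{(1)}\mid \bm{Z})=\log_2|C_1\cap C_2|$ and the Fano step are fine, but the single-letterization route you sketch cannot be completed: by relaxing $I(\bm{X}^{(1)};\bm{Y}\mid\bm{Z})\le I(\bm{X}^{(1)};\bm{Y},\bm{Z})$ you reintroduce the term $I(\bm{X}^{(1)};\bm{Z})=nR_1-\log_2|C_1\cap C_2|$, a purely combinatorial quantity unrelated to the channel. For example, if $C_1\cap C_2=\{0\}$ and $\Delta$ is uniform on $\mathbb{F}_2^2$, then $I(\bm{X}^{(1)};\bm{Y},\bm{Z})=nR_1$ while $n\,I[X^{(1)}_0;X^{(1)}_0+X^{(2)}_0,Y_0]=0$, so the per-letter bound you hope to establish is false and no amount of conditional-independence bookkeeping will rescue it. (Also, the displayed ``chain rule'' is miswritten: the left argument must remain $\bm{X}^{(1)}$, not $X^{(1)}_i$.) The fix is to keep the conditioning rather than discard it, which is exactly what the paper does: given $\bm{X}^{(1)}+\bm{X}^{(2)}$, the residual unknown is a uniform element $\bm{\delta}^x$ of (a coset of) $C_1\cap C_2$, and $\bm{Y}$ together with the sum is equivalent to observing $\bm{\delta}^x$ through the memoryless channel $\delta\mapsto(\delta,\delta)+\Delta_i$ (the known shift by the sum being invertible coordinate-wise). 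One then applies the point-to-point converse to this channel, whose capacity is precisely $I[X^{(1)}_0;X^{(1)}_0+X^{(2)}_0,Y_0]$, to the rate $\log_2|C_1\cap C_2|/n$ code.

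A second, smaller but real issue is your opening claim that $\Omega(1)$ success probability ``still yields $H(\bm{X}\mid\bm{Y})=o(n)$'' via a sharper Fano. That implication is false: a conditional law putting mass $\varepsilon$ on one codeword and spreading the rest uniformly has MAP success $\varepsilon=\Omega(1)$ yet $H(\bm{X}\mid\bm{Y})=\Theta(n)$, and Fano with $P_e=1-\Omega(1)$ only gives $H(\bm{X}\mid\bm{Y})\le(1-\Omega(1))\,n(R_1+R_2)+O(1)$, which after single-letterization yields a bound strictly weaker than the theorem. Handling an $\Omega(1)$-success hypothesis requires a strong converse for the relevant (genie-aided) memoryless channels (Wolfowitz/Arimoto style), which is what the paper implicitly invokes when it asserts that the rates cannot exceed the corresponding capacities; your items (1)--(3) are otherwise the same genie-aided reductions as the paper's.
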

\begin{proof}
The first of these conditions is necessary because otherwise one would be able to transmit an amount of information across the channel greater than its capacity. The other three follow from the fact that if one is able to recover {$\bm{X}$} from {$\bm{Y}$} it must also be possible to recover {$\bm{X}$} from $({\bm{Y}},{\bm{X}}^{(1)})$, $({\bm{Y}},{\bm{X}}^{(2)})$, or $({\bm{Y}},{\bm{X}}^{(1)}+{\bm{X}}^{(2)})$. The necessity of the second and third criteria follow immediately from this, whereas showing the necessity of the final criterion takes a little more work. 

To show that it is necessary, first observe that given the value of ${\bm{X}}^{(1)}+{\bm{X}}^{(2)}$ we can find some ${\bm{X}}^{\star (1)}$ and ${\bm{X}}^{\star(2)}$ with the appropriate sum. Then, there must exist some ${\bm{\delta}}^x\in C_1\cap C_2$ such that ${\bm{X}}^{(1)}={\bm{X}}^{\star(1)}+{\bm{\delta}}^x$ and ${\bm{X}}^{(2)}={\bm{X}}^{\star(2)}+{\bm{\delta}}^x$. So, recovering the value of $\bm{X}$ reduces to the problem of finding the value of ${\bm{\delta}}^x$. It is a random codeword in $C_1\cap C_2$, and the information provided on it by knowing the values of {$\bm{Y}$} and ${\bm{X}}^{(1)}+{\bm{X}}^{(2)}$ is equivalent to the information provided by running it through a channel that for each $i$ returns $(\delta^x_i,\delta^x_i)+\Delta_i$. This channel has a capacity of $I[X_0^{(1)};X_0^{(1)}+X^{(2)}_0,Y_0]$ and the desired conclusion follows.
\end{proof}
\begin{remark}
If one of $C_1$ and $C_2$ is contained in the other then $\lim_{n\to\infty} \log_2(|C_1\cap C_2|)/n=\min(R_1,R_2)$, allowing us to give a simpler alternative to the final criterion.
\end{remark}

\subsection{RM codes for multiple access channels}
If $C_1$ and $C_2$ are Reed-Muller codes, then the previous necessary conditions are essentially sufficient in the following sense.

\begin{theorem}\label{Thm:Achievability}
Consider the scenario explained above with $C_1$ and $C_2$ being Reed-Muller codes. Now, let $(X_0^{(1)},X_0^{(2)})$ be drawn uniformly at random from $\mathbb{F}_2^2$, $\Delta_0\sim P_{noise}$, and $Y_0=(X^{(1)}_0,X^{(2)}_0)+\Delta_0$ and assume that the following hold:
\begin{enumerate}
\item $R_1+R_2< I[(X^{(1)}_0,X^{(2)}_0);Y_0]$.
\item $R_1< I[X^{(1)}_0;X^{(2)}_0,Y_0]$.
\item $R_2< I[X^{(2)}_0;X^{(1)}_0,Y_0]$.
\item $\min(R_1,R_2)< I[X^{(1)}_0;X^{(1)}_0+X^{(2)}_0,Y_0]$.
\end{enumerate}
Then we can recover the value of $\bm{X}$ from $\bm{Y}$ with probability $1-o(1)$.
\end{theorem}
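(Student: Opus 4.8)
The plan is to follow the four-stage pipeline the authors outlined (bending, boosting, reduction, and — in the quantum case — the CSS map), adapting the bending/boosting machinery of \cite{abbe2023proof} from the point-to-point setting to the Q-MAC. Throughout, the essential structural facts I would lean on are that Reed-Muller codes are transitive (doubly transitive, in fact) and nested, so that ``merged weak local decoding'' of a single bit, a coordinate, or an XOR of coordinates is equivalent for all positions, and that shrinking $r$ by one shifts the rate by an amount that vanishes as $m\to\infty$ near a fixed rate. The first three hypotheses of the theorem match the necessary conditions of Theorem~\ref{thm:necessary} up to strictness, and hypothesis~4 is the version of the fourth necessary condition obtained via the Remark in the nested case; I would first reduce to the nested case by observing that for Reed-Muller codes $C_1=RM(r_1,m)$ and $C_2=RM(r_2,m)$ one always has $C_1\subseteq C_2$ or $C_2\subseteq C_1$, so $\log_2|C_1\cap C_2|/n\to\min(R_1,R_2)$.

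The core analytic step is \textbf{bending}: I would show that a pair of Reed-Muller codes of rates $(R_1,R_2)$ with $R_1+R_2$ strictly below the sum-rate capacity $I[(X^{(1)}_0,X^{(2)}_0);Y_0]$ achieves merged weak local decoding, i.e., one can estimate a fixed target bit — say $X^{(1)}_1$, or $X^{(2)}_1$, or $X^{(1)}_1+X^{(2)}_1$ — with error probability $1/2-\Omega(1)$. The idea, generalizing \cite{abbe2023proof}, is to consider the posterior of the target bit given $\bm{Y}$ and to bound its ``nontriviality'' via a second-moment / area-theorem style argument: the total mutual information between the joint codeword $\bm{X}$ and $\bm{Y}$ is at least $n(I[(X^{(1)}_0,X^{(2)}_0);Y_0]-R_1-R_2-o(1))$ by counting degrees of freedom modulo the codes, this information is spread by transitivity evenly over the coordinates and over the three ``merged'' symbols, and hence at least one of them must retain a constant amount of information, which by the usual relation between mutual information and bias forces a $1/2-\Omega(1)$ decoding error. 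This is the step I expect to be the main obstacle: the bending argument in \cite{abbe2023proof} is delicate (it hinges on nested sequences of subcodes, Plancherel/Fourier estimates on $\mathbb{F}_2^m$, and a careful induction on $m$), and here it must be run for a \emph{pair} of nested codes with correlated two-bit noise, tracking which of the three merged symbols carries the information and ensuring the symmetry arguments still close.

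Given bending, \textbf{boosting} is the RM-specific amplification: from a single-bit advantage $1/2-\Omega(1)$ for the merged symbol, passing to codes of slightly smaller degree (rates $R_1-\varepsilon$, $R_2-\varepsilon$, still satisfying all four strict inequalities for small $\varepsilon$) and exploiting transitivity plus the recursive Plotkin $(u,u+v)$ structure, one upgrades to full recovery of at least one of $\bm{X}^{(1)}$, $\bm{X}^{(2)}$, or $\bm{X}^{(1)}+\bm{X}^{(2)}$ with probability $1-o(1)$ — this is exactly the boosting lemma of \cite{abbe2023proof} applied coordinatewise to whichever merged symbol the bending step certified. Finally, the \textbf{reduction} step: once we know, say, $\bm{X}^{(1)}+\bm{X}^{(2)}$ exactly (the other two cases are analogous and easier), recovering the rest is equivalent — as in the proof of Theorem~\ref{thm:necessary} — to decoding the random codeword $\bm{\delta}^x\in C_1\cap C_2$ sent through the single-user channel $Y_i\mapsto(\delta^x_i,\delta^x_i)+\Delta_i$, whose capacity is $I[X^{(1)}_0;X^{(1)}_0+X^{(2)}_0,Y_0]$; since $C_1\cap C_2$ is itself a Reed-Muller code of rate $\min(R_1,R_2)<I[X^{(1)}_0;X^{(1)}_0+X^{(2)}_0,Y_0]$ by hypothesis~4, the point-to-point RM capacity theorem \cite{abbe2023proof} finishes the job; in the case where instead we recovered $\bm{X}^{(1)}$ (resp. $\bm{X}^{(2)}$), conditioning on it reduces to decoding $C_2$ (resp. $C_1$) on a channel of capacity $I[X^{(2)}_0;X^{(1)}_0,Y_0]$ (resp. $I[X^{(1)}_0;X^{(2)}_0,Y_0]$), and hypotheses~3 (resp.~2) plus the same theorem conclude. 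Combining the three stages, all four strict inequalities together yield recovery of $\bm{X}$ from $\bm{Y}$ with probability $1-o(1)$.
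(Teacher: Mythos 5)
Your proposal follows essentially the same route as the paper: a bending (entropy-deficit) argument at slightly perturbed code parameters showing that some merged bit ($X^{(1)}_i$, $X^{(2)}_i$, or their XOR) is weakly decodable, transitivity to make this uniform over coordinates, the boosting lemma of \cite{abbe2023proof} to fully recover one of $\bm{X}^{(1)}$, $\bm{X}^{(2)}$, $\bm{X}^{(1)}+\bm{X}^{(2)}$, and then the reduction (exactly as in the proof of Theorem~\ref{thm:necessary}) to a point-to-point channel on which the remaining nested RM code is below capacity, finished by the point-to-point RM capacity result. The only cosmetic differences are that the paper perturbs $m$ (taking $m'=m-\epsilon\sqrt{m}$ with the same degrees) rather than lowering the degrees/rates, and its bending step is the short explicit chain $H[\bm{Y'}]\le 2n'-\Omega(n')$ yielding some $i$ with $H[X'_i\mid \bm{Y'}_{-i}]\le 2-\Omega(1)$, rather than your looser mutual-information phrasing of the same deficit.
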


\begin{figure}
\includegraphics[width=0.5\textwidth]{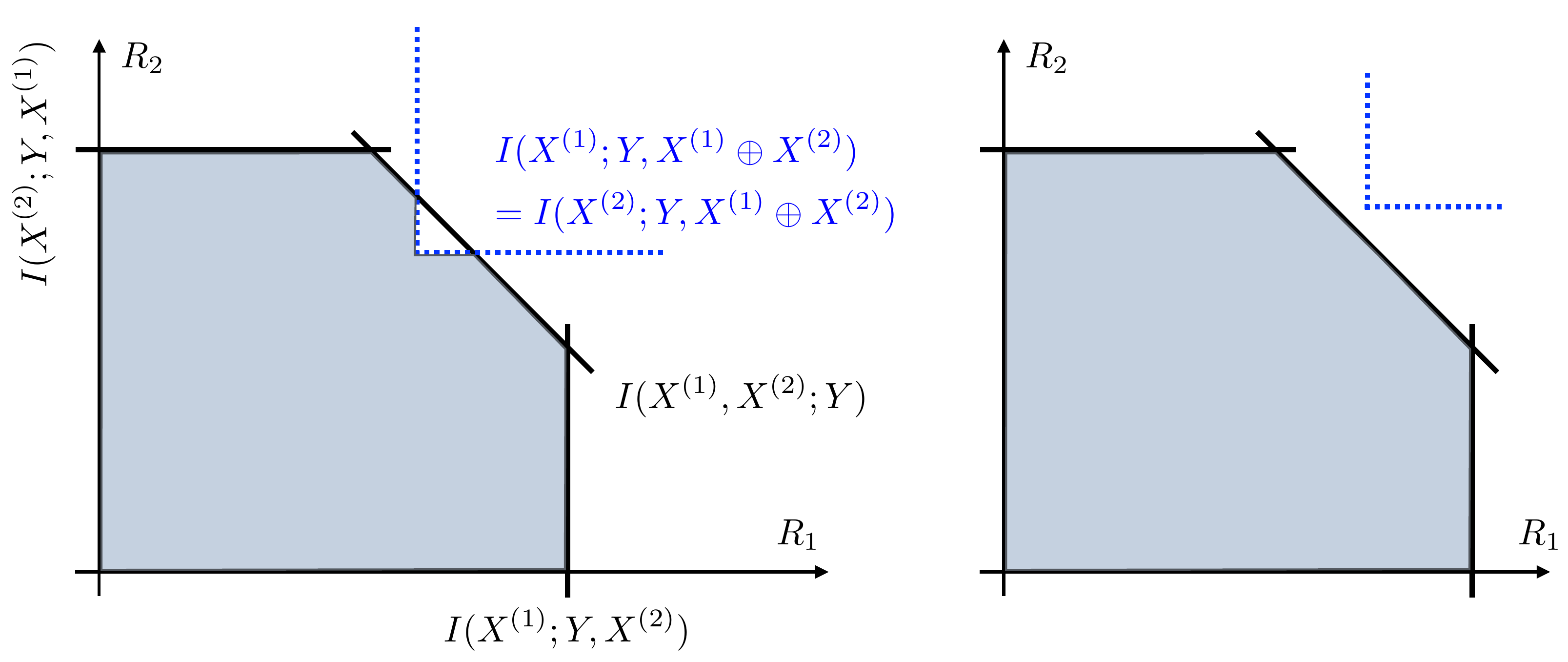}
    \caption{Achievable rate region of RM codes on a QMAC: the plain (black) bounds are the usual MAC bounds and the dash (blue) bounds result from the overlapping property of RM codes; the latter may be active (left) or inactive (right). }
    \label{fig:QMAC_rates}
\end{figure}  

\begin{proof}
First, define $m$, $r_1$ and $r_2$ so that $C_1=RM(r_1,m)$ and $C_2=RM(r_2,m)$, and the blocklength $n=2^m$. Let $m'=m-\epsilon\sqrt{m}$ where $\epsilon>0$ is a constant to be determined later. Next, let $C_1'=RM(r_1,m')$ and $C_2'=RM(r_2,m')$ . Then define $R'_1$, $R'_2$, $\bm{X'}$ etc analogously to $R_1$, $R_2$, $\bm{X}$ etc. Let ${m\choose \le r} \coloneqq \sum_{l=0}^{r} {m\choose l}$. Taking $m,m'$ as functions of the blocklength $n$, we can make $\lim_{n\to\infty} |R_1-R'_1|=\lim_{n\to\infty}2^{-(m-\epsilon\sqrt{m})}{m-\epsilon\sqrt{m}\choose \le r_1}-2^{-m}{m\choose \le r_1}$ and $\lim_{n\to\infty} |R_2-R'_2|= \lim_{n\to\infty}2^{-(m-\epsilon\sqrt{m})}{m-\epsilon\sqrt{m}\choose \le r_2}-2^{-m}{m\choose \le r_2}$ arbitrarily small by choosing a small enough value of $\epsilon$, so choose $\epsilon$ for which $R'_1+R'_2< I[(X^{(1)}_0,X^{(2)}_0);Y_0]$. 

Let ${\bm{Y'}_{-i}}$ denote the vector constructed from $\bm{Y'}$ by removing the $i^{th}$ bit, and ${\bm{Y'}_{<i}}\coloneqq (Y'_0,Y'_1,\dots, Y'_{i-1})$.
Next, observe that $H[\bm{Y'}]\le H[\bm{X'},\bm{Y'}]=H[\bm{X'}]+H[\bm{Y'}|\bm{X'}]=H[\bm{X'}]+(2-I[Y_0;(X^{(1)}_0,X^{(2)}_0)])n' \le (R'_1+R'_2+(2-R'_1-R'_2-\Omega(1))n'=2n'-\Omega(n')$. Thus there is an `entropy bending' argument (as in  \cite{abbe2023proof}) that 
\[\sum_{i=1}^{n'} H[Y'_i|\bm{Y'}_{<i}]=H[\bm{Y'}]=2n'-\Omega(n')\]
so there must exist $i$ for which $H[X'_i|{\bm{Y'}_{-i}}]=H[X'_i,\Delta'_i|{\bm{Y'}_{-i}}]-H[\Delta'_i]=H[Y'_i,\Delta'_i|\bm{Y'}_{-i}]-H[\Delta'_i]\le H[Y'_i|{\bm{Y'}_{-i}}]\le H[Y'_i|\bm{Y'}_{<i}]\le 2-\Omega(1)$. That implies that we must be able to determine the value of $X_i^{(1)\prime}$, $X^{(2)\prime}_i$, or $X_i^{(1)\prime}+X^{(2)\prime}_i$ from ${\bm{Y'}_{-i}}$ with nontrivial accuracy. By transitivity of RM codes, that then implies that this holds for all $i$, and that which of $X_i^{(1)\prime}$, $X^{(2)\prime}_i$, or $X_i^{(1)\prime}+X^{(2)\prime}_i$ we can determine with nontrivial accuracy must be the same for all $i$. 

The boosting argument~\cite{abbe2023proof} implies that if we can determine one bit of a codeword in $RM(r,m')$ with accuracy $1/2+\delta$ based on the noisy versions of the other bits of the codeword for some constant $\delta>0$, then we can completely recover the value of a codeword in $RM(r,m)$ from its noisy version with probability $1-o(1)$. $\bm{X}^{(1)}$, $\bm{X}^{(2)}$ and $\bm{X}^{(1)}+\bm{X}^{(2)}$ are codewords in $RM(r_1,m)$, $RM(r_2,m)$, and $RM(\max(r_1,r_2),m)$ respectively. So, if we can recover one bit of ${\bm{X'}^{(1)}}$, ${\bm{X'}^{(2)}}$, or $\bm{X'}^{(1)}+\bm{X'}^{(2)}$ with accuracy $1/2+\delta$ from the other bits in $\bm{Y'}$ then we can completely recover $\bm{X}^{(1)}$, $\bm{X}^{(2)}$, or $\bm{X}^{(1)}+\bm{X}^{(2)}$ from $\bm{Y}$ with probability $1-o(1)$. 

So, the fact that $R_1+R_2< I[(X_0^{(1)},X^{(2)}_0);Y_0]$ implies that we can recover at least one of $\bm{X}^{(1)}$, $\bm{X}^{(2)}$, and $\bm{X}^{(1)}+\bm{X}^{(2)}$ from $\bm{Y}$ with accuracy $1-o(1)$. Then the remaining assumptions imply that once we know that codeword recovering the remaining part of $\bm{X}$ is equivalent to recovering an RM code that was transmitted along a channel of capacity greater than its rate. We can do that with accuracy $1-o(1)$, so we can recover the value of $\bm{X}=(\bm{X}^{(1)},\bm{X}^{(2)})$ from $\bm{Y}$ with probability $1-o(1)$.
\end{proof}
Because of the previous section, the rate region in this theorem is tight. 
\subsection{Quantum Pauli channels}
Adding the condition for constructing a valid quantum RM code; $R_1+R_2\geq 1$ to the conditions in Theorem \ref{Thm:Achievability}, we can deduce a rate region where decoding the errors occurring in a Pauli noise channel modeled by a Q-MAC is possible. 
Usually for quantum error correction, syndrome decoding is used. However, our discussion involving codeword decoding for the Q-MAC is applicable using the Steane method for syndrome extraction \cite{gottesman2009introductionquantumerrorcorrection}, as this method generates a random codeword affected by the original noise vector, and decoding this random codeword is equivalent to finding the error that occurred on the original encoded quantum state. 

In Figure \ref{fig:robustness}, we fix a quantum RM code constructed out of two classical RM codes $C_1,C_2$ with $R_1=0.8$ and $R_2=0.8$, respectively, and we compare the region of channels $(p_X,p_Y,p_Z)$ that are decodable by successive decoding (Fig. \ref{fig:first}) with those that are decodable by joint decoding according to Theorem \ref{Thm:Achievability} (Fig. \ref{fig:second}). In other words, the plots of Figure \ref{fig:robustness} indicate the channels where a quantum RM code with $R_1=0.8,R_2=0.8$ fulfills the information theoretic conditions required for successful decoding. 
While Theorem~\ref{Thm:Achievability} provides the conditions for successful joint decoding, the conditions for successful successive decoding considered in Fig. \ref{fig:second} can be analyzed under three distinct settings:

\textbf{Setting I:} The decoder first decodes $X$ errors over $\text{BSC}(p_X + p_Y)$, requiring $R_1 = 0.8 \leq 1 - h_b(p_X + p_Y)$, where $h_b(\cdot)$ is the binary entropy function. Subsequently, it decodes $Z$ errors conditioned on the knowledge of $X$ errors, requiring $R_2 = 0.8 \leq I\left[X_0^{(2)};Y_0|X_0^{(1)}\right]= 1 - (p_X + p_Y) h_b\left(\frac{p_X}{p_X + p_Y}\right) - (p_I + p_Z) h_b\left(\frac{p_Z}{p_I + p_Z}\right)$.

\textbf{Setting II:} The decoder first decodes $Z$ errors over $\text{BSC}(p_Z + p_Y)$, requiring $R_2 = 0.8 \leq 1 - h_b(p_Z + p_Y)$. It then decodes $X$ errors conditioned on the knowledge of $Z$ errors, requiring $R_1 = 0.8 \leq I\left[X_0^{(1)};Y_0|X_0^{(2)}\right] =  1 - (p_Z + p_Y) h_b\left(\frac{p_Y}{p_Z + p_Y}\right) - (p_I + p_X) h_b\left(\frac{p_X}{p_I + p_X}\right)$.

\textbf{Setting III:} The decoder first decodes the XOR of the errors, $\bm{\Delta}^{(1)} + \bm{\Delta}^{(2)}$, which corresponds to a $\text{BSC}(p_X + p_Z)$. Since $C_1$ and $C_2$ are Reed-Muller (RM) codes with the same blocklength, one must be a subcode of the other, implying that their XOR is a codeword in the larger code. This imposes the condition $\max(R_1, R_2) = 0.8 \leq 1 - h_b(p_X + p_Z)$. After decoding the XOR, the decoder proceeds to decode either $\bm{\Delta}^{(1)}$ or $\bm{\Delta}^{(2)}$, requiring $\min(R_1, R_2) = 0.8 \leq I\left[X_0^{(1)}; Y_0 \mid X_0^{(1)} + X_0^{(2)}\right] = 1 - (p_I + p_Y) h_b\left(\frac{p_Y}{p_I + p_Y}\right) - (p_Z + p_X) h_b\left(\frac{p_X}{p_Z + p_X}\right)$.

If a channel $(p_X,p_Y,p_Z)$  satisfies the conditions of any one of the three settings, then it is decodable by successive decoding. The plots focus on the low noise regime with $p_X, p_Y,p_Z \leq 0.05$. Observe that for successive decoding there are only one-dimensional regions (lines) where decoding is possible and the hashing bound ($R_1+R_2\leq I[X_0^{(1)},X_0^{(2)};Y_0]$) is fulfilled with equality, whereas for joint decoding this ``optimal region'' is two-dimensional, i.e., it is much larger.

This distinction is even more apparent in Figure \ref{fig:third}, where we show the cross-section corresponding to $p_X=p_Y$. The gray line shows the hashing bound. For all parameters below this hashing bound a decodable code can be constructed with rate $R_1+R_2-1=0.6$. The shaded region shows the parameters for which a quantum RM code comprised of two component codes of rates $R_1=R_2=0.8$ will succeed if decoded successively. This region touches the hashing bound only at two points. This is in stark contrast to the set of parameters for which the same quantum RM code succeeds if decoded jointly. This code is now ``optimal'', i.e., touches the hashing bound over a whole continuous region.
\begin{figure}
    \centering
\begin{subfigure}{0.4\textwidth}
    \includegraphics[width=\textwidth]{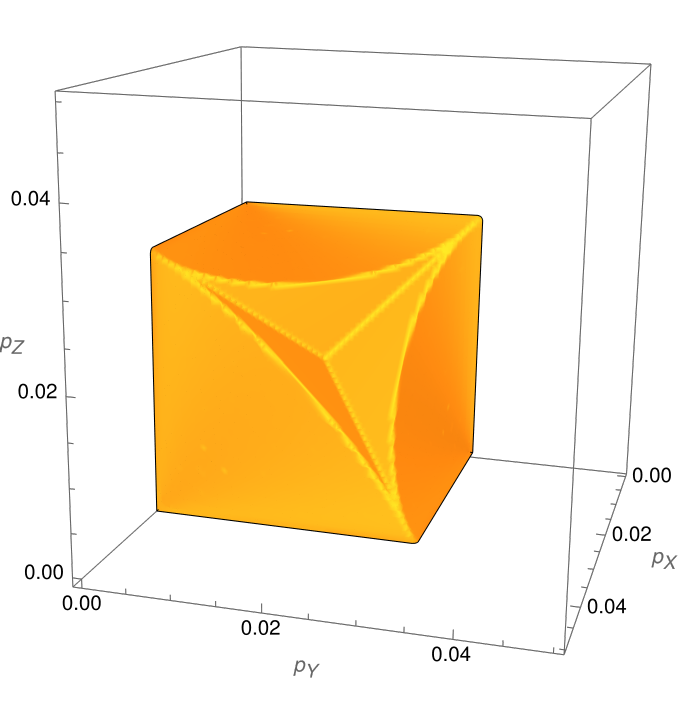}
    \caption{Region of channels decodable by successive decoding. Note that this region is the union of the three possible successive decoding strategies: (i) decode the $X$ error (i.e., $\bm{\Delta}^{(1)}$), then use the decoded information to decode the $Z$ error ($\bm{\Delta}^{(2)}$); (ii) or decode $Z$ errors first then $X$, and finally, (iii) we may instead first decode $\bm{\Delta}^{(1)} + \bm{\Delta}^{(2)}$, then decode either of $\bm{\Delta}^{(1)} $ or $\bm{\Delta}^{(2)}$ .}
    \label{fig:first}
\end{subfigure}
\hfill
\begin{subfigure}{0.4\textwidth}
    \includegraphics[width=\textwidth]{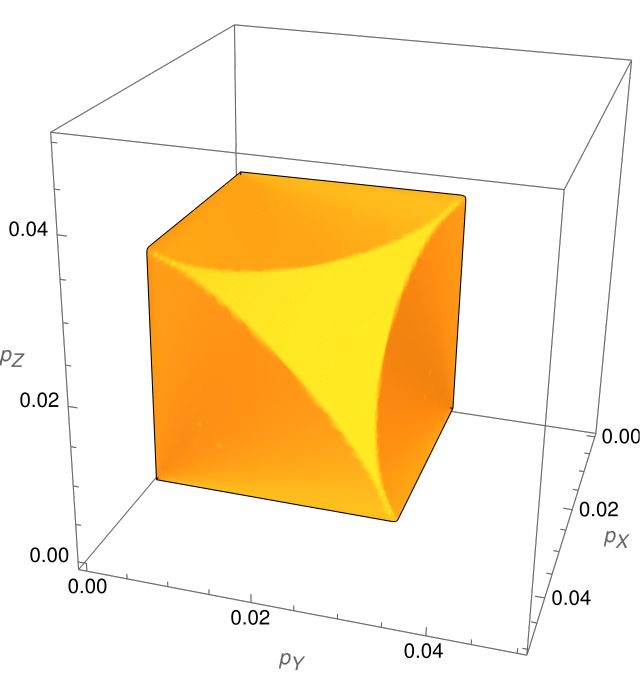}
    \caption{Region of channels decodable by joint decoding, where $(p_X,p_Y,p_Z)$ satisfy the conditions of Theorem \ref{Thm:Achievability}.}
    \label{fig:second}
\end{subfigure}    
\hfill

\caption{$R_1 =R_2 = 0.8$, focusing on the low noise regime $p_X,p_Y,p_Z\leq 0.05$}
    \label{fig:robustness}
\end{figure}
\begin{figure}
\centering
    \includegraphics[width=0.4\textwidth]{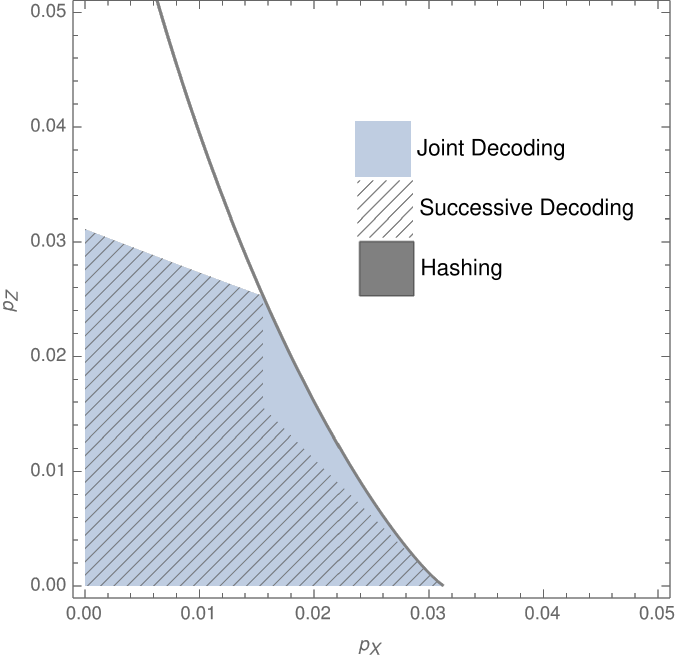}
    \caption{Cross-section $p_X=p_Y$ of Figure \ref{fig:robustness}.}
    \label{fig:third}
\end{figure}  
\section{Discussion}

\subsection{Variants of RM codes with augmented rate regions}

To the degree that the requirement that $\min(R_1,R_2)< I[X_0^{(1)};X_0^{(1)}+X_0^{(2)},Y_0]$ is a limitation, we have some options to potentially mitigate it. 

First of all, there is the  option to rely on time-sharing between codes with high $R_1$ and low $R_2$ and codes with low $R_1$ and high $R_2$, which could have been applied to the corner points directly.

Another option that prevents time sharing is to use the following `tensor products of RM codes' instead of RM codes themselves. Consider having $r_1$ and $r_2$ such that $RM(m,r_1)$ and $RM(m,r_2)$ have rates of $R_1$ and $R_2$ respectively. In this case, instead of setting $C_1$ and $C_2$ equal to $RM(m,r_1)$ and $RM(m,r_2)$ respectively we could set $C_1=RM(m,r_1)\otimes RM(m,m)$ and $C_2=RM(m,m)\otimes RM(m,r_2)$. In other words, $C_1$ would be the code whose codewords are the lookup tables for every polynomial in $\mathbb{F}_2^{2m}$ with degree in its first $m$ variables at most $r_1$ and $C_2$ would be the code whose codewords are the lookup tables for every polynomial in $\mathbb{F}_2^{2m}$ with degree in its last $m$ variables at most $r_2$. In this case, $C_1$ has rate $R_1$, $C_2$ has rate $R_2$, and $C_1\cap C_2=RM(m,r_1)\otimes RM(m,r_2)$ which has rate $R_1R_2$. That changes the requirement from $\min(R_1,R_2)< I[X_0^{(1)};X_0^{(1)}+X_0^{(2)},Y_0]$ to $$R_1R_2< I[X_0^{(1)};X_0^{(1)}+X_0^{(2)},Y_0].$$

A third option could be to take some of the coefficients of terms in $C_1\cap C_2$, change the value of $C_1$ or $C_2$ in order to require that they be $0$, and increase the value of $r_1$ and/or $r_2$ in order to compensate for the reduction in the rates of the codes. This would allow one to reduce $|C_1\cap C_2|$ while keeping $R_1$ and $R_2$ the same, but it has the problem that setting some coefficients to $0$ would likely disrupt the symmetries of the code. As such, the standard argument about our ability to recover noisy codewords would no longer apply and it is unclear how or if we could fix it.

A fourth option is to consider $\pi C_1$ and $C_2$, for some permutation $\pi$.

While these approaches help for MAC applications, except for the first, they are likely to result in codes that no longer satisfy the CSS condition, which means that they would not be usable for Pauli channels without additional modifications or restrictions.

\section*{Acknowledgment}
Drafts written by the authors were copy-edited for language using ChatGPT (v 4.0, OpenAI). The tool was limited to surface changes (grammar, spelling, style); it did not create content, ideas, data, analyses or references or alter scientific content. All corrections were checked by the authors, who approve the final text. 
\bibliographystyle{IEEEtran}
\bibliography{refs.bib}
\end{document}